%
%
%
%
%
%
%
\documentclass[%
 reprint,
 amsmath,amssymb,
 aps,article
]{revtex4-1}

\usepackage{graphicx}
\usepackage{dcolumn}
\usepackage{bm}
\usepackage{physics}
\usepackage{amsthm}
\usepackage[utf8]{inputenc}
\usepackage[english]{babel}

\newtheorem{theorem}{Theorem}


\begin{document}

\preprint{APS/123-QED}

\title{Measurement-device-independent verification of channel steering}

\author{InU Jeon}
\author{Hyunseok Jeong}
\affiliation{%
Department of Physics and Astronomy, Seoul National University, Seoul 08826, Korea}

\date{\today}

\begin{abstract}
  Extending the concept of steerability for quantum states, channel steerability is an ability to remotely control the given channel from a coherently extended party. Verification of channel steering can be understood as certifying coherence of the channel in an one-sided-device-independent manner with respect to a bystander. Here we propose a method to verify channel steering in a measurement-device-independent way. To do this, we first obtain Choi matrices from given channels and use canonical method of measurement-device-independent verification of quantum steering. As a consequence, exploiting channel-state duality which interconverts steerability of channels and that of states, channel steering is verified. We further analyze the effect of imperfect preparation of entangled states used in the verification protocol, and find that threshold of the undesired noise that we can tolerate is bounded from below by steering robustness.
\end{abstract}

\pacs{Valid PACS appear here}
\maketitle


\section{Introduction}
  Quantum steering is a non-classical phenomenon in that local measurements on one side can induce ensembles of local quantum states on the other side which cannot be explained by any classical correlations and local hidden states. This phenomenon has attracted great attention since it was implied in Einstein-Podolsky-Rosen's seminal paper~\cite{Einstein35}, and mentioned by Schr\"odinger~\cite{Schrodinger35} as a \textquoteleft paradox'. Recently, steering has been put into mathematical rigor~\cite{Wiseman07, Jones07}, and a number of theoretical developments~\cite{Cavalcanti09, Branciard12, Chen13, Schneeloch13, He13, Reid13, Wang14, Piani15prl, Uola15, Skrzypczyk15, Nagy16, Kiukas17, Saunders10, Bennet12, Supic16, Gheorghiu17, Ramanathan18} and experimental realizations~\cite{Saunders10, Bennet12, Ramanathan18, Smith12, Wittmann12, Handchen12, Sun14, Armstring15, Sun16, Bartkiewicz16, Tischler18, Cavailles18} have followed.

  The concept of this \textquoteleft \textquoteleft state steering" can be extended to quantum channels~\cite{Piani15}. In the \textquoteleft \textquoteleft channel steering", local measurements on a bystander's side can induce an instrument of reduced channels on the other side which cannot be explained by any classical correlations and an instrument of reduced channels. An operational definition of the steering is verification of non-separability when the steered party is trusted while steering party is untrusted~\cite{Jones07}. Thus an analogous operational definition can be made for the channel steering: the channel steering is a verification of a coherent extension when input and output parties of the reduced channels are trusted while a bystander is untrusted. Constructing the concept of channel steering analogous to that of state steering enables channel-state duality which preserves steerability~\cite{Piani15}. In Ref.~\cite{Piani15}, it was proved that a given quantum channel is steerable if and only if the corresponding quantum state obtained by channel-state duality is steerable. This result links the state steering and the channel steering; the two concepts are not independent topics, rather, they are interconvertible. To clearly show this correspondence, we added some pictures and explanations in Figure 1. One can see the analogy between the concepts of the state steering and channel steering.

  \begin{figure}[ht!]
      \centering
      \includegraphics[width=0.5\textwidth]{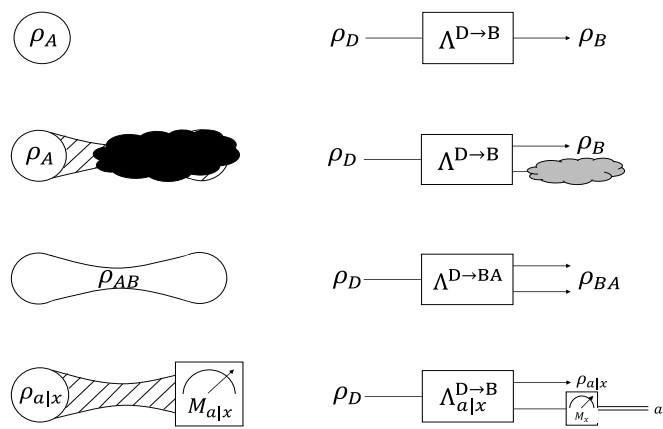}
      \caption{Comparison between state steering and channel steering. (Left) Given a local state $\rho_A$, an extended state $\rho_{AB}$ is such that one can retrieve the original local state $\rho_A$ by discarding a system $B$. If a set of measurements $\{M_{a|x}\}_{a,x}$ are performed on a system $B$, a system $A$ collapses to an assemblage $\{\rho_{a|x}\}_{a,x}$. By checking that whether measurements on a system $B$ really steered a system $A$, we determine steerability of the state. (Right) Channel steering can be defined in an analogous way. Given a channel from system $D$ to $B$, $\Lambda^{D\rightarrow B}$, one can come up with an extended channel with one input $D$ and two output systems $B$ and $A$ such that the original channel $\Lambda^{D\rightarrow B}$ is retrieved by discarding an added output system $A$, a bystander. If a set of measurements $\{M_{a|x}\}_{a,x}$ are performed on a bystander's system, channel reduces to an instrument from $D$ to $B$, $\{\Lambda_{a|x}^{D\rightarrow B}\}_{a,x}$. By checking that whether measurements on a bystander's system really steered a channel from $D$ to $B$, we determine steerability of the channel.}
      \label{figure1}
  \end{figure}

  Meanwhile, in the definitions of steering, whether the parties are trusted or untrusted plays an important role. We mean
by \textquoteleft \textquoteleft Trust"
  that one can obtain complete knowledge of the parties. In the case of state steering, this implies that one can reconstruct the local state of the steered party, while it means for the channel steering that one can reconstruct the reduced channel. In a practical situation, these are unrealizable prerequisites because reliability of experimenters, accuracy of measurement apparatus, and protecting quantum proccessing from external noise are imperfect. Therefore we are only independent of untrusted party, which is called one-sided device-independence(1s-DI). Recently, to alleviate these prerequisites, an alternative verification schemes that remove \textquoteleft \textquoteleft Trust" on the parties in entanglement verification~\cite{Buscemi12} and steering verification~\cite{Cavalcanti13} were proposed. Although the alternative scheme still requires trust on generating device or measurement device of external parties, this does not require perfect control over the experiment of the parties. Thus we can say that it is an alleviated scenario, which is usually called measurement-device-independent(MDI). In the entanglement and steering verifications, canonical techniques to convert entanglement and steering criteria from 1s-DI scenario to the MDI scenario have been well established~\cite{Branciard13, Kocsis15, Ku18}. The MDI method was exploited in quantum key distribution tasks to overcome detector-side channel attacks and obtain better secure distances~\cite{Braunstein12, Lo12}.

  In this paper, motivated by the previous developments, we propose an MDI verification scheme of channel steering. To do this, we first convert the steerability of the channels to that of states with the aid of the channel-state duality. Subsequently, we transform the steering witness in the 1s-DI scenario to the steering criterion in the MDI scenario. We then verify the steerability of the state obtained by channel-state duality using an MDI steering criterion that leads a verification of the steerability of the channel in an MDI manner. Furthermore, we show that the channel-state duality of steerability is still preserved even if we use any bipartite pure state with a full Schmidt-rank, and analyze the effect of imperfect preparation of the state for the channel-state duality. As a result, we find that determination of the channel steerability is still possible for a mixture of the entangled state with any unsteerable noise, and allowed proportion of the noise for a successful verification is bounded by steering robustness introduced in Ref.~\cite{Piani15prl}.

  We begin  with introducing preliminaries for an MDI channel steering verification. In Sec. II, state steering and an MDI verification are explained, and in Sec. III, the concept of channel steering is introduced. To combine the aforementioned concepts, we will introduce channel-state duality in Sec. IV, and show that channel-state duality of steerability is still preserved using any bipartite pure state with full Schmidt-rank. In Sec. V, our scheme to verify channel steering in an MDI way will be proposed and proved. In that section we also consider the effect of imperfect preparation of pure state to obtain dual state of the given channel, and conclude that an MDI verification of channel steering is still possible for some amount of noise, and observe that allowed portion of noise is given by steering robustness. In Sec. VI, we end up with conclusions and further questions for next research. In the paper, we attempt to approach the steering with assemblage and instruments, rather than state and channels, in order to exploit the resource theory of steering in Ref.~\cite{Gallego15}.

\section{quantum state steering}
  In this section, we will introduce some preliminaries to the quantum state steering. A positive semi-definite operator $\rho$ with unit trace is called a quantum state, or a state. A positive semi-definite operator $\rho_a$ with trace less than or equal to unity is called a quantum substate, or a substate. An ensemble $\mathcal{E}$ of a state $\rho$ is a set of substates $\{\rho_a\}_a$ such that the sum of all elements is the state, $\sum_a \rho_a = \rho$. A collection of ensembles $\{\mathcal{E}_x\}_x = \{\rho_{a|x}\}_{a,x}$ is called a state assemblage, or sometimes called an assemblage in short. A state assemblage $\{\rho_{a|x}\}_{a,x}$ is called unsteerable if every substate $\rho_{a|x}$ arise from classical proccessing of some ensemble $\{\rho_\lambda\}_\lambda$ as
    \begin{align}\label{US state}
        \rho_{a|x} = \sum_\lambda p(a|x,\lambda) \rho_\lambda,
    \end{align}
  with some probability distribution $p(a|x,\lambda)$. Normalized substates $\{\rho_\lambda / \Tr[\rho_\lambda]\}_\lambda$ are conventionally referred to as hidden states. If an assemblage is not unsteerable, it is called steerable.

  One can construct similar concepts to observables. A positive-operator-valued-measurement (POVM) is a set of positive semi-definite operators $\{M_a\}_a$ such that the sum of all elements equals an identity, $\sum_a M_a = I$. A collection of POVMs, $\{M_{a|x}\}_{a,x}$, is called a measurement assemblage. A measurement assemblage $\{M_{a|x}\}_{a,x}$ is called jointly measurable or compatible if every POVM element $M_{a|x}$ arise from classical proccessing of some POVM $\{M_\lambda\}_\lambda$, usually named grand POVM, as
    \begin{align} \label{JM}
       M_{a|x}= \sum_\lambda p(a|x,\lambda)\, M_\lambda,
    \end{align}
  with some probability distribution $p(a|x,\lambda)$. If a measurement assemblage is not compatible, it is called incompatible.

It is straightforward to see one-to-one correspondence between state assemblage and measurement assemblage; One can obtain a state assemblage from an extended state by performing measurements in a measurement assemblage on a partial state as
  \begin{align}
      \rho_{a|x}= \mathrm{Tr}_B [\rho_{AB} (I \otimes M_{a|x})],
  \end{align}
  and conversely, one can construct a  measurement assemblage from a state assemblage as
    \begin{align}
        M_{a|x} := (\Tilde{\rho})^{-\frac{1}{2}} \Tilde{\rho}_{a|x} (\Tilde{\rho})^{-\frac{1}{2}},
    \end{align}
  where $\rho = \sum_a \rho_{a|x}$, and tilde denotes an operator projected on $range(\rho)$. Furthermore, it is proved that they have steerability correspondence, that is, a state assemblage is unsteerable if and only if a measurement assemblage in Eq. (4) is compatible~\cite{Uola15}.

  Convex combination of two unsteerable assemblages yields an unsteerable assemblage~\cite{Gallego15}, which means that a set of unsteerable assemblages is convex. Therefore, for every steerable assemblage $\{\rho_{a|x}\}_{a,x}$, we can draw a hyperplane which separates $\{\rho_{a|x}\}_{a,x}$ and a set of unsteerable assemblages. Such a separation is realized by a set of positive semi-definite operators $\{F_{a|x}\}_{a,x}$ called steering witness such that
    \begin{align}\label{SW}
       \sum_{a,x}\Tr[F_{a|x} \rho_{a|x}] \, > \sup_{\{\sigma_{a|x}^{US}\}}\sum_{a,x}\Tr[F_{a|x} \sigma_{a|x}^{US}],
    \end{align}
  where the supremum is taken over a set of unsteerable assemblages. The right hand side of Eq.~(\ref{SW}) is usually called steering bound and conventionally denoted by $\alpha$.

  In an experiment, to obtain knowledge of a state assemblage, one needs to have perfect measurement apparatus because state tomography via imperfect measurements can yield wrong state assemblages different from the real one. Therefore, not to mislead the determination of steerability, we require high reliability of the experimenter. In the case of steering, the steering party is untrusted while the steered party is trusted, thus only the steering party is free from the reliability requirement. This property is called one-sided device-independence (1s-DI) of steering verification. Recently, based on Buscemi's semi-quantum nonlocal games~\cite{Buscemi12}, another scheme of steering verification is introduced~\cite{Cavalcanti13} - an experimenter who has a state assemblage is questioned in quantum states and answers in real numbers. According to experimenter's responses, one can determine whether the state assemblage is steerable or not. This does not require reliability of the experimenter (that is, steered party), thus it is called a measurement-device-independent(MDI) scenario. An MDI steering verification is experimentally demonstrated~\cite{Kocsis15} and canonical way to convert 1s-DI steering witness to that in the MDI scenario is proposed~\cite{Ku18} in analogous way with entanglement witness~\cite{Branciard13}. We will revisit this topic in Sec. V.

\section{Quantum Channel Steering}
  In this section, we will introduce preliminaries to quantum channel steering. A completely positive and trace-preserving linear map $\Lambda[\,\cdot\,]$ is called a quantum channel. Every quantum channel $\Lambda[\,\cdot\,]$ can be written as $\sum_i K_i \,[\, \cdot \,]\, K_i ^\dagger$ with $\sum_i K_i^\dagger K_i = I$, where the operators $\{K_i\}_i$ are called Kraus operators~\cite{Wilde13}. A completely positive and trace non-increasing linear map $\Lambda_a[\, \cdot \,]$ is called a quantum subchannel. Every quantum subchannel $\Lambda_a [\, \cdot \,]$ can be written as $\sum_i K_i \,[\, \cdot \,]\, K_i ^\dagger$ with $\sum_i K_i^\dagger K_i \leq I$, where the operators $\{K_i\}_i$ form a subset of a set of Kraus operators. An instrument of a quantum channel $\Lambda$ is a set of quantum subchannels $\{\Lambda_a \}_a$ such that the sum of every elements forms the quantum channel, $\sum_a \Lambda_a = \Lambda$. A collection of instruments $\{ \Lambda_{a|x} \}_{a,x}$ of quantum channels is called a channel assemblage. A channel assemblage is said to be unsteerable if every subchannel $\Lambda_{a|x}$ arise from classical processing of some instrument $\{ \Lambda_\lambda \}_\lambda$ as
    \begin{align} \label{US channel}
      \Lambda_{a|x}= \sum_\lambda p(a|x,\lambda) \Lambda_\lambda,
    \end{align}
  with some probability distribution $p(a|x,\lambda)$. If a channel assemblage is not unsteerable, it is called steerable.

  For a quantum channel from a system $D$ to $B$, $\Lambda^{D\rightarrow B}$, we can come up with a channel extension with one sender $D$ but two receivers $B$ and $A$, $\Lambda^{D\rightarrow BA}$, which satisfies $\Lambda^{D\rightarrow B}=\mathrm{Tr}_A \circ \Lambda^{D\rightarrow BA}$. If an extended quantum channel $\Lambda ^{D\rightarrow BA}$ can be expressed as a sum of decompositions into quantum state and quantum subchannel,
    \begin{align}
      \Lambda ^{D\rightarrow BA} = \sum_\lambda \Lambda_\lambda^{D\rightarrow B} \otimes \rho_\lambda ^A ,
    \end{align}
  we call it an incoherent extension. If a channel extension is not an incoherent, it is called coherent.

There is a correspondence  between an instrument and a POVM; For every POVM, we can obtain an instrument by performing measurements on one side of a channel extension, and for every instrument, we can find a POVM which induces a given instrument from an extended channel. Along this correspondence, one can consider every channel assemblage $\{\Lambda_{a|x}^{D\rightarrow B}\}_{a,x}$ as an induced one from an extended channel being performed by POVM $\{M_{a|x}^A\}_{a,x}$,
    \begin{align} \label{instrument and measurement}
        \Lambda_{a|x}^{D\rightarrow B}[\,\cdot\,] = \Tr_A[(I \otimes M_{a|x}) \, \Lambda^{D\rightarrow BA}[ \, \cdot \, ]\,],
    \end{align}
  and one can define channel steerability of an extented channel $\Lambda^{D\rightarrow BA}$ as steerability of an instrument of the reduced subchannels $\{\Lambda_{a|x}^{D\rightarrow B}\}_{a,x}$ obtained by Eq.~(\ref{instrument and measurement}).
 As a consequence, the following two statements are equivalent; one is to say that  a channel assemblage is steerable  and the other is  that   a bystander, say Alice, has ability to remotely control the channel $\Lambda^{D\rightarrow B}$ by performing measurements $\{M_{a|x}\}_{a,x}$ on her side.
    Furthermore, it was proved that a verification of the channel steering is equivalent to verifying that a channel extension is coherent when Alice is not a trusted bystander~\cite{Piani15}. This is an analogous argument with the state steering in that steering verification is equivalent to verifying that a shared state is entangled when Alice is not a trusted party~\cite{Jones07}.

\section{Channel-State Duality}

  Aforementioned two concepts of steering can be linked via channel-state duality, or Choi-Jamio\l kowski isomorphism~\cite{Jamiolkowski72,Choi75}. For a quantum channel $\Lambda^{D \rightarrow B} : \mathcal{B}(\mathcal{H}_D) \rightarrow \mathcal{B}(\mathcal{H}_B)$, we can construct an extended quantum state $\rho_{CB} \in \mathcal{B}(\mathcal{H}_C \otimes \mathcal{H}_B)$, where $\mathcal{H}_C$ is a Hilbert space isomorphic to $\mathcal{H}_D$, by
    \begin{align}\label{Choi Matrix}
        \rho^{CB} = (I^C \otimes \Lambda^{D\rightarrow B}) [\,\ket{\Phi}\bra{\Phi}\,].
    \end{align}
  Here, $\ket{\Phi} \in \mathcal{H}_C \otimes \mathcal{H}_D$ is a maximally entangled state
    \begin{align}
        \ket{\Phi} = \frac{1}{\sqrt{d}} \sum_i \ket{ii},
    \end{align}
  where $\{\ket{i}\}_i$ is an orthonormal basis of Hilbert space $\mathcal{H}_D$ (and thus of $\mathcal{H}_C$). In this paper, $\ket{\Phi}$ will be exclusively used to denote a maximally entangled state. The mapping from a quantum channel to a quantum state in the Eq. (\ref{Choi Matrix}) is bijection if we restrict the range of the map to the quantum states with left reduced state is maximally mixed state. A resultant quantum state in Eq. (\ref{Choi Matrix}) is called a Choi matrix of a channel $\Lambda^{D \rightarrow B}$.

  Channel-state duality carries number of interesting correspondences, such as unital maps - states with right reduced state is maximally mixed state, entanglement breaking maps - separable states, and entanglement binding maps~\cite{Horodecki00} - bound entangled states~\cite{Horodecki98}. The correspondence is also shown to be hold for steerability~\cite{Piani15} : A channel extension $\Lambda^{D\rightarrow BA}$ of the channel $\Lambda^{D\rightarrow B}$ is unsteerable if and only if its Choi matrix is unsteerable by any local measurements on $A$. One can re-express this correspondence between channel assemblages and state assemblages : A channel assemblage $\{\Lambda_{a|x}\}_{a|x}$ is unsteerable if and only if a state assemblage consists of their Choi matrices is unsteerable.

  In this paper, we found that steerability correspondence can be obtained, not only by maximally entangled state, but also by any bipartite pure state $\ket{\psi}$ with full Schmidt-rank.  From now on, for simplicity, we will denote the state assemblage obtained from Choi-Jamio\l kowski isomorphism of a channel assemblage using $\ket{\psi}$, $\{(I^C \otimes \Lambda_{a|x}^{D\rightarrow B})[\,\ket{\psi}\bra{\psi}\,]\}_{a,x}$, as $\{\rho_{a|x}^{\ket{\psi}} \}_{a,x}$ and call it a \textit{Choi assemblage obtained by $\ket{\psi}$}.

    \begin{theorem}
       Let $\ket{\psi} \in \mathcal{H}_C \, \otimes \, \mathcal{H}_D$ be a pure state with full Schmidt-rank. Then a channel assemblage $\{\Lambda_{a|x}^{D\rightarrow B}\}_{a,x}$ is unsteerable if and only if a Choi assemblage obtained by $\ket{\psi}$, $\{\rho_{a|x}^{\ket{\psi}} \}_{a,x}$, is unsteerable.
    \end{theorem}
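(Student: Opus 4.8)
The plan is to reduce the claim to the already-established maximally-entangled-state version of the duality~\cite{Piani15}, by exhibiting the Choi assemblage obtained by $\ket{\psi}$ as an invertible local filtering of the ordinary Choi assemblage obtained by $\ket{\Phi}$, performed on the trusted party. The first step is to represent the full-Schmidt-rank state as a filtered maximally entangled state: any bipartite vector can be written as $\ket{\psi}=(G\otimes I^D)\ket{\Phi}$ for a suitable operator $G$ on $\mathcal{H}_C$, and since the reduced state of $\ket{\psi}$ on $C$ equals $GG^\dagger/d$, the hypothesis that $\ket{\psi}$ has full Schmidt-rank is precisely the statement that $G$ is invertible. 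It is important to place the filter on $C$, the factor the channel does not touch, rather than on $D$.

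The second step is to push this filtering through the channel-state duality. Because $G$ acts on $C$ while each $\Lambda_{a|x}^{D\to B}$ transforms the complementary input system $D$, the operator $G$ can be pulled outside the subchannel, so that
\begin{align}
 \rho_{a|x}^{\ket{\psi}}
 &= (I^C\otimes\Lambda_{a|x}^{D\to B})\big[(G\otimes I^D)\ket{\Phi}\bra{\Phi}(G^\dagger\otimes I^D)\big] \nonumber\\
 &= (G\otimes I^B)\,\rho_{a|x}^{\ket{\Phi}}\,(G^\dagger\otimes I^B).
\end{align}
Thus the $\ket{\psi}$-Choi assemblage arises from the $\ket{\Phi}$-Choi assemblage by the invertible filter $G\otimes I^B$ acting entirely on the trusted party $CB$. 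I would also record that this filter preserves normalization, since $\Tr[GG^\dagger]=d$ for a normalized $\ket{\psi}$, so the filtered assemblage remains an assemblage of a unit-trace state.

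The key step is then a filtering lemma: an invertible local filter on the trusted party sends unsteerable assemblages to unsteerable assemblages and back. This is immediate from the definition in Eq.~(\ref{US state}): if $\rho_{a|x}^{\ket{\Phi}}=\sum_\lambda p(a|x,\lambda)\sigma_\lambda$ with hidden states $\sigma_\lambda\ge 0$, then conjugating by $G\otimes I^B$ gives $\rho_{a|x}^{\ket{\psi}}=\sum_\lambda p(a|x,\lambda)(G\otimes I^B)\sigma_\lambda(G^\dagger\otimes I^B)$ with the same probability distribution and positive hidden states summing to the filtered reduced state, so unsteerability is preserved; the converse is identical upon applying $G^{-1}$, which exists exactly because $\ket{\psi}$ has full Schmidt-rank. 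Combining the lemma with the known equivalence for $\ket{\Phi}$~\cite{Piani15} closes the argument:
\begin{align}
 \{\Lambda_{a|x}^{D\to B}\}\ \text{unsteerable}
 &\iff \{\rho_{a|x}^{\ket{\Phi}}\}\ \text{unsteerable} \nonumber\\
 &\iff \{\rho_{a|x}^{\ket{\psi}}\}\ \text{unsteerable}.
\end{align}

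I expect the main obstacle to be the bookkeeping in the filtering identity—choosing the representation $\ket{\psi}=(G\otimes I^D)\ket{\Phi}$ that places $G$ on $C$ so it commutes with the subchannel, and verifying that the filter lands on the trusted (steered) system rather than on the untrusted bystander $A$. This placement is essential: a filter on the untrusted side would not preserve steerability in general, while the invertibility of $G$ on the trusted side is exactly what upgrades the forward implication (which holds for any $\ket{\psi}$) to the full equivalence, making the full-Schmidt-rank hypothesis indispensable for the backward direction.
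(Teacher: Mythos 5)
Your proof is correct, but it takes a genuinely different route from the paper's. You reduce the general-$\ket{\psi}$ statement to the maximally-entangled base case of Ref.~\cite{Piani15} by writing $\ket{\psi}=(G\otimes I^D)\ket{\Phi}$ with $G$ invertible, commuting $G$ past the channel to get $\rho_{a|x}^{\ket{\psi}}=(G\otimes I^B)\,\rho_{a|x}^{\ket{\Phi}}\,(G^\dagger\otimes I^B)$, and then invoking a filtering lemma (an invertible filter on the steered system preserves unsteerability in both directions, by conjugating the hidden-state decomposition of Eq.~(\ref{US state})). The paper instead gives a self-contained direct argument: it decomposes each hidden state $\sigma_\lambda^{CB}$ into pure states $\ket{\psi_{i,\lambda}}=(I\otimes K_{i,\lambda})\ket{\Phi}$, writes $\ket{\psi}=(I\otimes K)\ket{\Phi}$, verifies via partial traces that $\{\sqrt{p_{i,\lambda}}K_{i,\lambda}K^{-1}\}_{i,\lambda}$ is a Kraus family, and thereby exhibits an explicit instrument $\{\Lambda_\lambda\}_\lambda$ with $\sigma_\lambda^{CB}=(I\otimes\Lambda_\lambda)[\ket{\psi}\bra{\psi}]$, concluding by injectivity of the $\ket{\psi}$-Choi map; it then recovers Piani's result as the special case $K=I$. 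The mathematical engine is the same in both cases --- invertibility of the operator attached to a full-Schmidt-rank state --- but your version is more modular and makes the role of that hypothesis transparent, at the cost of presupposing the $\ket{\Phi}$ equivalence rather than deriving it, and of leaving the witnessing instrument implicit rather than constructing it. Two small remarks: your parenthetical that a filter on the ``untrusted'' side would fail is slightly off target here, since the assemblage lives entirely on the trusted system $CB$ once the bystander is measured; and the normalization observation $\Tr[GG^\dagger]=d$ is not actually needed, because positivity of the filtered hidden states together with their summing to a unit-trace state already makes them a valid ensemble.
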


    \begin{proof}
       To prove the sufficiency, let us assume that the given channel assemblage $\{\Lambda_{a|x}^{D\rightarrow B}\}_{a,x}$ is unsteerable. Then we can write $\Lambda_{a|x}^{D\rightarrow B} = \sum_\lambda p(a|x,\lambda)\, \Lambda_\lambda$, and the Choi assemblage obtained by $\ket{\psi}$ reads $\{\sum_\lambda p(a|x,\lambda) \,\, (I \otimes \Lambda_\lambda)[\,\ket{\psi}\bra{\psi}\,]\}_{a,x}$, where $\{\,(\,I \,\otimes \, \Lambda_\lambda \,)[\,\ket{\psi}\bra{\psi}\,]\}_\lambda$ is an ensemble. This is exactly the form of Eq. (1), which proves the sufficiency.

       To prove the necessity, let us assume that a Choi assemblage of the given channel assemblage $\{\Lambda_{a|x}^{D\rightarrow B}\}_{a,x}$ obtained by $\ket{\psi}$ is unsteerable. Then there exists some conditional probability distribution $p(a|x,\lambda)$ and a set of quantum states $\{\sigma_\lambda^{CB}\}_\lambda$ which satisfy
         \begin{align}\label{necessity proof}
            (I^C \otimes \Lambda_{a|x}^{D\rightarrow B})[\,\ket{\psi}\bra{\psi}\,] = \sum_\lambda p(a|x,\lambda) \sigma_\lambda ^{CB}.
         \end{align}
      Polar decomposition of $\sigma_\lambda ^{CB}$ can be written as
        \begin{align}\label{Polar Decomposition}
           \sigma_\lambda ^{CB} = \sum_i p_{i,\lambda} \ket{\psi_{i,\lambda}}\bra{\psi_{i,\lambda}},
        \end{align}
      with some pure states $\ket{\psi_{i,\lambda}} \in \mathcal{H}_C \otimes \mathcal{H}_B$ and probability distribution $p_{i,\lambda}$. Using the property of maximally entangled state, let us express pure states as
        \begin{align}\label{pure state expression}
            \ket{\psi} &= (I \otimes K) \ket{\Phi} =(K ^T \otimes I) \ket{\Phi},\\
            \ket{\psi_{i,\lambda}} &= (I \otimes K_{i,\lambda}) \ket{\Phi} =(K_{i,\lambda} ^T \otimes I) \ket{\Phi},
        \end{align}
      with some operators $K$ and $K_{i,\lambda}$ in $B(\mathcal{H}_D)$ (and thus in $B(\mathcal{H}_C)$). From Eq. (\ref{necessity proof}) and (13), summing over $a$ and tracing out a system $B$ yields $(K^\dagger K)^T / d = \sum_\lambda \sigma_\lambda^C$. Meanwhile, from Eq. (\ref{Polar Decomposition}) and Eq. (14), tracing out a system $B$ gives us $\sigma_\lambda^C = \sum_i p_{i,\lambda} (K_{i,\lambda} ^\dagger K_{i,\lambda})^T / d$. By combining the two results, we obtain that $(K^\dagger K)^T = \sum_{i,\lambda} p_{i,\lambda} (K_{i,\lambda} ^\dagger K_{i,\lambda})^T$ which guarantees that $\{\sqrt{p_{i,\lambda}}K_{i,\lambda} K^{-1}\}_{i,\lambda} $ is a set of Kraus operators. Therefore the quantum states $\sigma_\lambda^{CB}$ can be written as
        \begin{align}
            \sigma_\lambda^{CB} = (I^C \otimes \Lambda_\lambda^{D\rightarrow B}) [\,\ket{\psi}\bra{\psi}\,],
        \end{align}
      where $\Lambda_\lambda^{D\rightarrow B} [ \, \cdot \, ] = \sum_i p_{i,\lambda} K_{i,\lambda} K^{-1}\, [\,\cdot \,]\, (K^{-1})^\dagger K_{i,\lambda}^\dagger$ is a subchannel and  $\{\Lambda_\lambda\}_\lambda$ is an instrument. Furthermore, it can be easily shown that a mapping
        \begin{align} \label{bijectivity of filter}
           A \mapsto K^{-1} \, A \, (K^{-1})^\dagger
        \end{align}
      is bijective. Therefore, Eq. (\ref{necessity proof}), Eq. (15), bijectivity of the map (\ref{bijectivity of filter}) and the channel-state duality complete the necessity proof.

    \end{proof}
   We note here that the steerability correspondence proved in Ref.~\cite{Piani15} can be derived as a special case of the Theorem 1, obtained by setting $K = I$.

\section{MDI verification of channel steering}
In order  to determine the steerability of a channel assemblage $\{\Lambda_{a|x}^{D\rightarrow B}\}_{a,x}$, one needs to perform channel tomography to ascertain whether it can be obtained from classical processing of some instrument, as in the form of Eq. (\ref{US channel}). However, channel tomography requires perfect control over input and output systems. For example, experimenters must be reliable, their generating device for input states and measurement device for tomography should be accurate enough. Consequently, for deficient apparatus or untrustworthy experimenters, one cannot be assured of results of steerability determination. Therefore, to alleviate the demands, an MDI verification of channel steering is needed, thus in this section, we will propose how to verify channel steering in an MDI manner.

  \begin{figure}[hb!]
    \centering
    \includegraphics[width=0.5\textwidth]{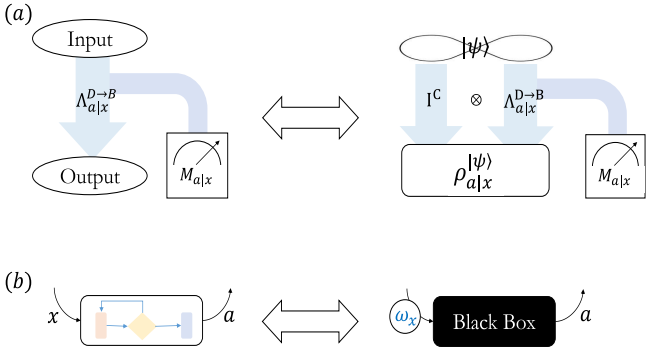}
    \caption{Schematic of an MDI verification protocol of channel steering. (a) Due to Theorem 1, steerability of the channel assemblage (Left) is equivalent to the steerability of the Choi assemblage obtained by $\ket{\psi}$ (Right). Thus we first obtain Choi assemblage using bipartite pure states $\ket{\psi}$ with full Schmidt-rank, and verify steerability of the resultant Choi assemblage $\{\rho_{a|x}^{\ket{\psi}}\}_{a,x}$. (b) (Left) 1s-DI verification of state steering can be accomplished by asking and receiving classical information from the experimenter when the whole process can be trusted. It is depicted as a white box which takes and yields classical information. (Right) If we encode the question in quantum states, MDI verification of state steering is possible without any trust on the process. It is depicted as a black box because we do not need to care about what is taking place inside. Therefore, by verifying obtained Choi assemblage in the MDI protocol, the MDI verification of channel steering is accomplished.}
    \label{figure2}
\end{figure}

  For the first step, we will verify steerability of a Choi assemblage obtained by $\ket{\psi}$ instead of that of the channel assemblage, by virtue of the steerability correspondence proved in the preceding section. Suppose that a channel assemblage $\{\Lambda_{a|x}^{D\rightarrow B}\}_{a,x}$ of which one wants to determine steerability is given. One can obtain the corresponding Choi assemblage obtained by $\ket{\psi}$, $\{\rho_{a|x}^{\ket{\psi}} \}_{a,x}$, by preparing bipartite pure states with full Schmidt-rank $\ket{\psi}\bra{\psi} \in B(\mathcal{H}_C \otimes \mathcal{H}_D)$ and transmitting the $D$ part through the subchannels $\Lambda_{a|x}^{D\rightarrow B}$, while preserving the $C$ part. In this process, one who supervises the experiment actively participates in the verification protocol by generating quantum states, thus he/she will be named as a \textit{referee} from now on. To verify steerability of the Choi assemblage obtained by $\ket{\psi}$ in an MDI way, we will adopt canonical method of an MDI verification protocol proposed in Refs.~\cite{Branciard13, Ku18}. Let $\{F_{a|x}\}_{a.x}\subset B(\mathcal{H}_C \otimes \mathcal{H}_B)$ be a steering witness for a Choi assemblage $\{\rho_{a|x}^{\ket{\psi}} \}_{a,x}$ such that
    \begin{align}
        \sum_{a,x} \Tr[F_{a|x} \, \rho_{a|x}^{\ket{\psi}}] > \sup_{\{\sigma_{a|x}^{US}\}} \sum_{a,x} \Tr[F_{a|x} \, \sigma_{a|x}^{US}] := \alpha,
    \end{align}
  where the supremum is taken over all unsteerable assemblages $\{\sigma_{a|x}^{US}\}_{a,x} \subset B(\mathcal{H}_C \otimes \mathcal{H}_B) $. The existence of such a witness is guaranteed by the convexity of unsteerable assemblages~\cite{Gallego15} and hyperplane separation theorem. We can decompose a steering witness into
    \begin{align}\label{SW Decomposition}
        F_{a|x} - \frac{\alpha}{|\mathcal{X}|}I = \sum_{z,y} \beta^{z,y,x}_{1,1,a} \,\, \tau_z  \otimes \omega_y
    \end{align}
  for any $a$ and $x$, where $\mathcal{X}$ is an index set of $x$, $\{\tau_z\}_z \subset B(\mathcal{H}_C)$ and $\{\omega_y\}_y \subset B(\mathcal{H}_B)$ are tomographically complete sets of states. Here, decomposition of the steering witness needs not be unique. To accomplish the MDI verification of steerability, the referee prepares quantum states $\tau_z^T \in B(\mathcal{H}_{C'})$ and $\omega_y^T\in B(\mathcal{H}_{B'})$, where $\mathcal{H}_{X'}$ is a Hilbert space isomorphic to $\mathcal{H}_{X}$ for $X \in \{C,B\}$, and provides them to parties $C$ and $B$, say Charlie and Bob, respectively. Charlie (Bob) thereupon tries to maximize a value
    \begin{align} \nonumber \label{Score}
       &I(\{\rho_{a|x}^{\ket{\psi}} \}_{a,x},\beta) \\= &\sum_{a,x,y,z} \beta^{z,y,x}_{1,1,a} \,\, \Tr [ (Q^{C'C} \otimes P^{BB'})(\tau_z^T \otimes \rho_{a|x}^{\ket{\psi}} \otimes \omega_y^T)],
    \end{align}
  which is usually called score, by performing joint measurements $Q^{C'C}$ ($P^{BB'}$) on the given state $\tau_z^T$ ($\omega_y^T$) and his part of the output substate $\rho_{a|x}^{\ket{\psi}}$. As a consequence, it can be shown that the score (\ref{Score}) is positive for a steerable assemblage $\{\rho_{a|x}^{\ket{\psi}} \}_{a,x}$ while non-positive for any unsteerable assemblage $\{\sigma_{a|x}^{US}\}_{a,x}$. A sketch of the MDI verification protocol of the channel steering is provided in Figure 2. In this figure, we used white-box and black-box pictures to focus on the \textquoteleft trust' on the process and the type of questions, while simplifying other details.

  \textit{proof}

  Let us first prove that (\ref{Score}) is non-positive for unsteerable assemblages. For any unsteerable assemblage $\{\sigma_{a|x}^{US}\}_{a,x}$, one can write $\sigma_{a|x}^{US} = \sum_\lambda p(a|x,\lambda) \sigma_\lambda^{CB}$ for some probability distribution $p(a|x,\lambda)$ and ensemble $\{\sigma_\lambda^{CB}\}_\lambda$, as in the Eq. (\ref{US state}). Therefore, for any joint POVM $Q^{C'C}$, $P^{BB'}$, the score (\ref{Score}) reads
    \begin{align}
        &I(\{\sigma_{a|x}^{US}\}_{a,x},\beta) = \sum_{a,x,y,z,\lambda} \beta^{z,y,x}_{1,1,a} p(a|x,\lambda)  \nonumber
        \\  \cross & \,\, \Tr[(Q^{C'C} \otimes P^{BB'})(\tau_z^T \otimes \sigma_\lambda^{CB} \otimes \omega_y^T)] \nonumber
        \\ = & \sum_{a,x,y,z,\lambda} \beta^{z,y,x}_{1,1,a} \,\, p(a|x,\lambda) \,\, \Tr [ (R_\lambda^{C'B'})(\tau_z^{C'} \otimes \omega_y^{B'})^T],
    \end{align}
  where $R_\lambda^{C'B'}$ is a reduced POVM element, defined by $R_\lambda^{C'B'} = \Tr_{CB}[\,(Q^{C'C} \otimes P^{BB'})(I \otimes \sigma_\lambda^{CB} \otimes I) \,]$. Since a reduced POVM element is a positive semi-definite operator, one can convert the reduced POVM element to a substate $\rho_{R,\lambda} := (R_\lambda^{C'B'})^T / N$ so that $\{ \rho_{R,\lambda}\}_\lambda$ forms an ensemble, where $N$ is a normalizing constant, $N = \Tr[Q^{C'C} \otimes P^{BB'}]$. Substituting $R_\lambda^{C'B'}$ by $\rho_{R,\lambda}$, the score reads

    \begin{align}  \nonumber
         &I(\{\sigma_{a|x}^{US}\}_{a,x},\beta)\\ \nonumber
         =& \Tr[ \sum_{a,x,\lambda} \, p(a|x,\lambda) \,\, (R_\lambda^{C'B'})^T \,\, \sum_{z,y} \, \, \beta^{z,y,x}_{1,1,a} \,\, (\tau_z^{C'} \otimes \omega_y^{B'})] \\ \nonumber
         =& N \,\,\sum_{a,x} \Tr[\sum_{\lambda} \, p(a|x,\lambda) \,\, \rho_{R,\lambda} \,\, (F_{a|x} - \frac{\alpha}{\mathcal{|X|}}I)] \\
         =& N \,\, \sum_{a,x} \Tr[\rho_{a|x}^{US} \,\, (F_{a|x} - \frac{\alpha}{\mathcal{|X|}}I) ] \,\, \leq 0,
    \end{align}
  which proves the latter part of the statement, that is, the score is non-positive for any unsteerable assemblage. Here, $\rho_{a|x}^{US} \, = \, \sum_\lambda p(a|x,\lambda) \rho_{R,\lambda}$, thus the last inequality in the Eq. (21) holds by definition of the steering witness.

  Meanwhile, for the Choi assemblage $\{\rho_{a|x}^{\ket{\psi}} \}_{a,x}$, Charlie and Bob can obtain positive score by projecting the system $C'C$ and $BB'$ into maximally entangled state $\ket{\Phi}\bra{\Phi}$,

  \begin{align}
      &I(\{\rho_{a|x}^{\ket{\psi}} \}_{a,x},\beta) \nonumber
      \\ = & \sum_{a,x,y,z} \beta^{z,y,x}_{1,1,a} \, \Tr[\,(\ket{\Phi}\bra{\Phi} \otimes \ket{\Phi}\bra{\Phi})(\tau_z^T \otimes \rho_{a|x}^{\ket{\psi}} \otimes \omega_y^T)\,] \nonumber
        \\ = & \sum_{a,x} \,\, \Tr [ \sum_{z,y} \,\, \beta^{z,y,x}_{1,1,a} \,\, (\tau_z \otimes \omega_y) \,\, \rho_{a|x}^{\ket{\psi}} ]\,\, / \, {d_C d_B} \nonumber
        \\ = & \sum_{a,x} \Tr[(F_{a|x} - \frac{\alpha}{\mathcal{|X|}}I) \,\, \rho_{a|x}^{\ket{\psi}}] \,\, / \, {d_C d_B} \, > 0 ,
  \end{align}
  where we used used the property of maximally entangled state, $\Tr[\, \ket{\Phi}\bra{\Phi} (A \otimes B)\,] = \Tr[A^TB]\,\,/\, d_A$, for the second equality. The last inequality is derived by the definition of the steering witness. Therefore we can always obtain positive score for the Choi assemblage $\{\rho_{a|x}^{\ket{\psi}} \}_{a,x}$, which completes the proof that the score (\ref{Score}) is a steering criterion in the MDI scenario. $\square$

  Here, in order to avoid misleading, we point out that the MDI verification of channel steering is \textit{not} free from trust or perfect control over the experiment. Although MDI scenario is independent from measurement device, still it requires generation of quantum states - bipartite pure states with full Schmidt-rank $\ket{\psi}\bra{\psi}$ to obtain the Choi assemblage from the channel $\Lambda^{D\rightarrow B}$, and sets of tomographically complete quantum states $\{\tau_z\}_z$ and $\{\omega_y\}_y$. Thus we still need trust on or control over the generating devices. However, we can still say that the MDI verification is better than the original scenario for following reasons. First, in the 1s-DI scenario, it is hard to test reliability assumptions such as trustworthy experimenter, accurate measurement apparatus, perfect control over the experiment. On the contrary, in the MDI scenario, the assumption - generation of quantum states - is open to a test from an external party. Any ombudsman can bring their own measurement apparatus and test generated quantum states via state tomography. In this sense, we can say that the MDI verification is more reliable. Second, in the MDI scenario, measurement inefficiency of the experiment does not affect steerability at all, along similar line with the MDI entanglement verification~\cite{Branciard13}. When losses occur, the only effect to the score is that the probability part, $Tr [ (Q^{C'C} \otimes P^{BB'})(\tau_z^T \otimes \rho_{a|x}^{CB} \otimes \omega_y^T)]$, is multiplied by measurement efficiencies of Charlie and Bob. Since positive factor does not change the sign, score is still non-positive for any unsteerable assemblage and positive for the Choi assemblage $\{\rho_{a|x}^{\ket{\psi}} \}_{a,x}$. This guarantees loss tolerance of the MDI steering verification with respect to Charlie and Bob. Third, in the MDI scenario, imperfect generation of quantum states can be analyzed and its effect on verification protocol can be quantified. For imperfect preparation of quantum question, when type of the questions and form of the score are set in two-qubit system, the effect to the score is exactly quantified in Ref.~\cite{Kocsis15} and generalized for inefficient measurements in Ref.~\cite{Jeon19}. Furthermore, for imperfect generation of pure states with full Schmidt-rank which are used for channel-state duality, although we cannot obtain perfect Choi assemblage obtained by $\ket{\psi}$, we can still verify channel steering in MDI way as follows. Let us suppose that we failed to perfectly generate full Schmidt-rank pure state, instead, let us say that we generated mixtures of full Schmidt-rank pure state and colored noise with relative weight $0 \leq w\leq 1$,
    \begin{align}
        \rho_w^{CD} = w \ket{\psi}\bra{\psi} + (1-w)\sigma_{colored},
    \end{align}
  where $\sigma_{colored} = \sum_i p_i \ket{ii}\bra{ii}$ is a colored noise. This is the case frequently occur in laboratories since colored noise is a result of decoherence of the bipartite pure state $\ket{\psi}\bra{\psi} = \sum_{i,j} \sqrt{p_ip_j} \ket{ii}\bra{jj}$. One can observe that if all off-diagonal parts of the pure state $\ket{\psi}\bra{\psi}$ is washed out, it ends up in the colored noise.

  If we use Eq. (23) for obtaining channel-state duality, that is, if we transmit $D$ part to the subchannel and preserve $C$ part, we obtain
    \begin{align} \label{Choi assemblage with colored noise}
        &(I^C \otimes \Lambda_{a|x}^{D\rightarrow B}) (\rho_w^{CD}) \nonumber
        \\=\, w \, \rho_{a|x}^{\ket{\psi}} + &(1-w)\, (I \otimes \Lambda^{D\rightarrow B}_{a|x}) \left[ \, \sigma_{colored} \, \right],
    \end{align}
  where $\{ \rho_{a|x}^{\ket{\psi}} \}_{a,x}$ is the Choi assemblage obtained by $\ket{\psi}$ from the given channel assemblage  $\{\Lambda_{a|x}^{D\rightarrow B}\}_{a,x}$. One can consider Eq. (\ref{Choi assemblage with colored noise}) as a convex combination of two state assemblages $\{ \rho_{a|x}^{\ket{\psi}} \}_{a,x}$ and $\{(I \otimes \Lambda^{D\rightarrow B}_{a|x}) \left[\,  \sigma_{colored} \, \right] \}_{a,x}$. We observe that the colored noise is separable, thus unsteerable. By the result of the theorems in Ref.~\cite{Gallego15}, local operations do not increase the steerability of a state assemblage, thus $\{(I \otimes \Lambda^{D\rightarrow B}_{a|x}) \left[ \, \sigma_{colored} \, \right] \}_{a,x}$ remains unsteerable. Meanwhile, from the definition of steering measures such as steerable weight~\cite{Skrzypczyk14} or steering robustness~\cite{Gallego15}, steering measure is zero if and only if the assemblage is unsteerable. Furthermore, forenamed steering measures are convex monotones~\cite{Gallego15}, which means that for any real number $0 \leq r \leq 1$, and two assemblages $\{\rho_{a|x}\}_{a,x}$, $\{\sigma_{a|x}\}_{a,x}$,
  \begin{align}
      S(\,r \rho_{a|x} + (1-r)\, \sigma_{a|x}) \leq \, r \, S(\rho_{a|x}) + (1-r)\, S(\sigma_{a|x}),
  \end{align}
  where $S$ is the steering measure. As a consequence, nonzero steering measure of Eq. (\ref{Choi assemblage with colored noise}) implies nonzero steering measure of the Choi assemblage obtained by $\ket{\psi}$, $\{ \rho_{a|x}^{\ket{\psi}} \}_{a,x}$. Therefore, by verifying steerability of Eq. (\ref{Choi assemblage with colored noise}), we end up with verifying channel assemblage $\{\Lambda^{D\rightarrow B}_{a|x}\}_{a,x}$.

  The preceding argument does not only cover colored noise, but also deals with all type of noise which is unsteerable, because what we exploited is the property that local operations do not increase steerability. Let us assume that the generated state is in the form of
    \begin{align}
      \rho_w^{CD} = w \ket{\psi}\bra{\psi} + (1-w)\sigma^{US},
    \end{align}
  where $\sigma^{US}$ is an unsteerable noise. Then after transmitting $D$ part to the subchannel and preserving $C$ part, we obtain
    \begin{align} \label{Choi assemblage with unsteerable noise}
        &(I^C \otimes \Lambda_{a|x}^{D\rightarrow B}) [\rho_w^{CD}] \nonumber
        \\=\, w \, \rho_{a|x}^{\ket{\psi}} + &(1-w)\, (I \otimes \Lambda^{D\rightarrow B}_{a|x}) \left[ \, \sigma^{US} \, \right].
    \end{align}
  Following the same argument as before, we conclude that verification of steerability of the Eq. (27) implies that of the channel assemblage.


How much unsteerable noise can we then  tolerate? In other words, at least how much noise is required to obliterate the steerability of the assemblage? The lower bound of answer of this question is determined by \textit{steering robustness of the assemblage} $R(\{\rho_{a|x}\})$,
    \begin{align} \label{SR}
        R(&\{\rho_{a|x}\}) = \inf_{\{\sigma_{a|x}\}} \, t \,\,\,\, s.t\nonumber
        \\ \frac{\rho_{a|x}+ t\, \sigma_{a|x}}{1+t}& \,\, \text{is an unsteerable assemblage,}
    \end{align}
  where infimum is taken over a set of all assemblages. If we restrict $\{\sigma_{a|x}\}$ to a set of unsteerable assemblages, the obtained value, say $R^{US}(\{\rho_{a|x}\})$, will be larger than or equal to $R(\{\rho_{a|x}\})$, and determine the amount of allowed noise by $\frac{R^{US}(\{\rho_{a|x}^{\ket{\psi}}\})}{1+R^{US}(\{\rho_{a|x}^{\ket{\psi}}\})}$. Therefore, the threshold of the noise in generating pure states $\ket{\psi}$ to verify steerability of the channel assemblage is at least $\frac{R(\{\rho_{a|x}^{\ket{\psi}}\})}{1+R(\{\rho_{a|x}^{\ket{\psi}}\})}$. This tells us that, even if any bipartite pure state with full Schmidt-rank can be used to obtain channel-state duality, in practical situation we should use states of which their Choi assemblage has high steering robustness to endure undesired noise.

  At this point, we can raise three natural questions. First, how much would be a gap between $R$ and $R^{US}$? Second, which pure state $\ket{\psi}$ yields the largest steering robustness (and $R^{US}$) for the given channel assemblage $\{\Lambda^{D\rightarrow B}_{a|x}\}_{a,x}$? (A mixed state cannot be such a state, due to the linearity of quantum channel.)  We conjecture that it would be a maximally entangled state $\ket{\Phi}$, independent of a given channel, while we leave its proof for future work. Third, does a Choi assemblage preserve order of steering measure of pure states? That is, for a given channel assemblage $\{\Lambda^{D\rightarrow B}_{a|x}\}_{a,x}$ and for pure states $\ket{\psi}$ and $\ket{\phi}$ such that $S(\ket{\psi}) \leq S(\ket{\phi})$, where $S(\ket{X})$ is a steering measure of pure state $\ket{X}$, would it be true that $S(\{\rho_{a|x}^{\ket{\psi}}\}) \leq S(\{\rho_{a|x}^{\ket{\phi}}\})$? These are interesting open questions.

\section{Conclusion}

  We proposed a way to verify channel steering in an MDI manner. We first converted a channel assemblage to a state assemblage via Choi-Jamio\l kowski isomorphism using bipartite pure state with full Schmidt-rank, and then determined steerability of the state assemblage. Afterwards, we applied canonical method of MDI verification of state steering to Choi assemblage obtained by $\ket{\psi}$, and showed that the steering criterion, called score, is non-positive for any unsteerable state assemblages while it is positive for the Choi assemblage. As a consequence of steerability correspondence we proved, this verifies steerability of the given channel assemblage.

  We further analyzed the situation of imperfect preparation of a pure state with full Schmidt-rank for obtaining Choi assemblage. A common case would be that colored noise is generated and thus a pure state is contaminated, just like a Werner state. We showed that not only for colored noise, but also for any type of unsteerable noise, can we verify channel steering in an MDI manner for a portion of undesired noise bounded from below by steering robustness of a state assemblage.

  Our work leads to a couple of natural questions as follows. (1) How much is a gap between $R$ and $R^{US}$? (2) Which pure state gives the maximal steering robustness for a given channel assemblage? (3) Does a Choi assemblage preserve the order of the steering measure for pure states? We leave these open problems for future research.



\begin{thebibliography}{26}
\bibitem{Einstein35} A. Einstein, B. Podolsky, and N. Rosen, Phys. Rev. \textbf{47}, 777 (1935).
\bibitem{Schrodinger35} E. Schr\"odinger, Proc. Cambridge Philos. Soc. \textbf{31}, 555 (1935).
\bibitem{Wiseman07} H. M. Wiseman, S. J. Jones, and A. C. Doherty, Phys. Rev. Lett.\textbf {98}, 140402 (2007).
\bibitem{Jones07} S. J. Jones, H. M. Wiseman, and A. C. Doherty, Phys. Rev. A \textbf {76}, 052116 (2007).

\bibitem{Cavalcanti09} E. G. Cavalcanti, S. J. Jones, H. M. Wiseman, and M. D. Reid,
Phys. Rev. A ~{\bf 80}, 032112 (2009).

\bibitem{Branciard12} C. Branciard, E. G. Cavalcanti, S. P. Walborn, V. Scarani, and H. M. Wiseman,
Phys. Rev. A ~{\bf 85}, 010301(R) (2012).

\bibitem{Chen13} J. L. Chen, X. J. Ye, C. Wu, H. Y. Su, A. Cabello, L. C. Kwek, and C. H. Oh,
Sci. Rep. ~{\bf 3}, 2143 (2013).

\bibitem{Schneeloch13} J. Schneeloch, C. J. Broadbent, S. P. Walborn, E. G. Cavalcanti, and J. C. Howell,
Phys. Rev. A ~{\bf 87}, 062103, (2013).

\bibitem{He13} Q. Y. He and M. D. Reid,
Phys. Rev. Lett. ~{\bf 111}, 250403 (2013).

\bibitem{Reid13} M. D. Reid,
Phys. Rev. A ~{\bf 88}, 062108 (2013).

\bibitem{Wang14} M. Wang, Q. Giong, and Q. He,
Opt. Lett. ~{\bf 39}, 6703 (2014).

\bibitem{Piani15prl} M. Piani and J. Watrous,
Phys. Rev. Lett. ~{\bf 114}, 060404 (2015).

\bibitem{Uola15} R. Uola, C. Budroni, O. G\"unhe, and J.P. Pellonp\"a\"a,
 Phys. Rev. Lett. \textbf {115}, 230402 (2015).

\bibitem{Skrzypczyk15} P. Skrzypczyk and D. Cavalcanti,
Phys. Rev. A ~{\bf 92}, 022354 (2015).

\bibitem{Supic16} I. Supic and M. J. Hoban,
New J. Phys. ~{\bf 18}, 075006 (2016).

\bibitem{Nagy16} S. Nagy and T. Vertesi,
Sci. Rep. ~{\bf 6}, 21634 (2016).

\bibitem{Kiukas17} J. Kiukas, C. Budroni, R. Uola, and J. P. Pellonp\"a\"a,
Phys. Rev. A ~{\bf 96}, 042331 (2017).

\bibitem{Gheorghiu17} A. Gheorghiu, P. Wallden, and E. Kashefi,
New J. Phys. ~{\bf 19}, 023043 (2017).



\bibitem{Saunders10} D. J. Saunders, S. J. Jones, H. M. Wiseman, and G. J. Pryde,
Nat. Phys. ~{\bf 6}, 845 (2010).

\bibitem{Bennet12} A. J. Bennet, D. A. Evans, D. J. Saunders, C. Branciard, E. G. Cavalcanti, H. M. Wiseman, and G. J. Pryde,
Phys. Rev. X ~{\bf 2}, 031003 (2012).

\bibitem{Ramanathan18} R. Ramanathan, D. Goyeneche, S. Muhammad, P. Mironowicz, M. Gr\"unfeld, M. Bourennane, and P. Horodecki
Nat. Comm. ~{\bf 9}, 4244 (2018).




\bibitem{Smith12} D. H. Smith, G. Gillett, M. P. de Almeida, C. Branciard, A. Fedrizzi, T. J. Weinhold, A. Lita, B. Calkins, T. Gerrits, H. M. Wiseman, S. W. Nam, and A. G. White,
Nat. Comm. ~{\bf 3}, 625 (2012).

\bibitem{Wittmann12} B. Wittmann, S. Ramelow, F. Steinlechner, N. J. Langford, N. Brunner, H. W. Wiseman, R. Ursin, and A. Zeilinger,
New J. Phys. ~{\bf 14}, 053030 (2012).

\bibitem{Handchen12} V. Handchen, T. Eberle, S. Steinlechner, A. Samblowsky, T. Franz, R. F. Werner, and R. Schnabel,
Nat. Photon. ~{\bf 6}, 596 (2012).

\bibitem{Sun14} K. Sun, J. S. Xu, X. J. Ye, Y. C. Wu, J. L. Chen, C. F. Li, and G. C. Guo,
Phys. Rev. Lett. ~{\bf 113}, 140402 (2014).

\bibitem{Armstring15} S. Armstrong, M. Wang, R. Y. Teh, Q. Gong, Q. He, J. Janousek, H. A. Bachor, M. D. Reid, and P. K. Lam,
Nat. Phys. ~{\bf 11}, 167 (2015).

\bibitem{Sun16} K. Sun, X. J. Ye, J. S. Xu, X. Y. Xu, J. S. Tang, Y. C. Wu, J. L. Chen, C. F. Li, and G. C. Guo,
Phys. Rev. Lett.  ~{\bf 116}, 160404 (2016).

\bibitem{Bartkiewicz16} K. Bartkiewicz, A. Cernoch, K. Lemr, A. Miranowicz, and F. Nori,
Sci. Rep. ~{\bf 6}, 38076 (2016).

\bibitem{Tischler18} N. Tischler, F. Ghafari, T. J. Baker, S. Slussarenko, R. B. Patel, M. M. Weston, S. Wollmann, L. K. Shalm, V. N. Verma, S. W. Nam, H. C. Nguyen, H. M. Wiseman and G. J. Pryde
Phys. Rev. Lett. ~{\bf 121}, 100401 (2018).

\bibitem{Cavailles18} A. Cavaill\'es, H. L. Jeannic, J. Raskop, G. Guccione, D. Markham, E. Diamanti, M. D. Shaw, V. B. Verma, S. W. Nam, and J. Laurat
Phys. Rev. Lett. ~{\bf 121}, 170403 (2018).






\bibitem{Piani15} M. Piani, J. Opt. Soc. Am. B \textbf {32}, A1 (2015).
\bibitem{Buscemi12} F. Buscemi, Phys. Rev. Lett. \textbf{108}, 200401 (2012).
\bibitem{Cavalcanti13} E. G. Cavalcanti, M. J. W. Hall, and H. M. Wiseman,
Phys. Rev. A ~{\bf 87}, 032306 (2013).
\bibitem{Branciard13} C. Branciard, D. Rosset, Y.C. Liang, and N. Gisin, Phys. Rev. Lett. \textbf {110}, 060405 (2013).
\bibitem{Kocsis15} S. Kocsis, M. J. W. Hall, A. J. Bennet, D. J. Saunders, and G. J. Pryde,
Nat. Comm. ~{\bf 6}, 5886 (2015).
\bibitem{Ku18} H. Y. Ku, S. L. Chen, H. B. Chen, F. Nori, and Y. N. Chen, arXiv 1807.08901v1 (2018).
\bibitem{Braunstein12} S. L. Braunstein and S. Pirandola, Phys. Rev. Lett. \textbf{108}, 130502 (2012).
\bibitem{Lo12} H.-K. Lo, M. Curty, and B. Qi, Phys. Rev. Lett. \textbf{108}, 130503 (2012).
\bibitem{Gallego15} R. Gallego and L. Aolita, Phys. Rev. X  \textbf{5}, 041008 (2015).
\bibitem{Kraus83} K. Kraus, \textit{States Effects and Operations: Fundamental Notions of Quantum Theory}. (Springer-Verlag, Berlin, Germany, 1983).
\bibitem{Jamiolkowski72} A. Jamio\l kowski, Rep. Math. Phys. \textbf {3}, 275-278 (1972).
\bibitem{Choi75} M. D. Choi, Linear Algebra Appl. \textbf{10}, 285-290 (1975).
\bibitem{Horodecki00} P. Horodecki, M. Horodecki, and R. Horodecki, J. Mod. Opt. \textbf{47}, 347
\bibitem{Horodecki98} M. Horodecki, P. Horodecki, and R. Horodecki, Phys. Rev. Lett. \textbf{80}, 5239
\bibitem{Jeon19} I.-U. Jeon and H. Jeong, Phys. Rev. A \textbf{99}, 012318 (2019).
\bibitem{Skrzypczyk14} P. Skrzypcyk, M. Navascu\'es, and D. Cavalcanti, Phys. Rev. Lett. \textbf{112}, 180404 (2014).




\end{thebibliography}
\end{document}